\newtheorem{lemma}{\noindent \textbf{Lemma}}
\newtheorem{stylized}{\noindent \textbf{Stylized Fact}}
\newtheorem{proposition}{\noindent \textbf{Proposition}}
\newtheoremstyle{remarkstyle}
  {\topsep}   
  {\topsep}   
  {\slshape}  
  {0pt}       
  {\bfseries} 
  {.}         
  {5pt plus 1pt minus 1pt} 
  {}          
\theoremstyle{remarkstyle}
\newtheorem{remark}{\noindent \textbf{Remark}}
\newtheorem{definition}{\noindent \textbf{Definition}}
\begin{document}
\title{Market Mechanism Design for Profitable On-Demand Transport Services}
\author{Malcolm Egan$^{1,*}$ and Michal~Jakob$^{1}$

\begin{center}
{\footnotesize {\
\textit{
$^{1}$ Agent Technology Center, Faculty of Electrical Engineering, Czech Technical University in Prague, Czech Republic.\\[0pt]
Email: \{malcolm.egan,michal.jakob\}@agents.fel.cvut.cz\\[0pt]
$^{*}$ Corresponding author.
} } }
\end{center}
}

\maketitle


\begin{abstract}
%

On-demand transport services in the form of dial-a-ride and taxis are crucial parts of the transport infrastructure in all major cities. However, not all on-demand transport services are equal. In particular, not-for-profit dial-a-ride services with coordinated drivers significantly differ from profit-motivated taxi services with uncoordinated drivers. As such, there are two key threads of research for efficient scheduling, routing, and pricing for passengers: dial-a-ride services (first thread); and taxi services (second thread). Unfortunately, there has been only limited development of algorithms for joint optimization of scheduling, routing, and pricing; largely due to the widespread assumption of fixed pricing. In this paper, we introduce another thread: profit-motivated on-demand transport services with coordinated drivers. To maximize provider profits and the efficiency of the service, we propose a new market mechanism for this new thread of on-demand transport services, where passengers negotiate with the service provider. In contrast to previous work, our mechanism jointly optimizes scheduling, routing, and pricing. Ultimately, we demonstrate that our approach can lead to higher profits, compared with standard fixed price approaches, while maintaining comparable efficiency.
\newline
\newline
\textbf{Keywords:} on-demand transport; market mechanism; pricing
\end{abstract}
\newpage

\newpage

\section{Introduction}\label{Introduction}

Call a taxi in any major city and it will often arrive within minutes. Despite the success and widespread use of taxis and other on-demand transport services, there is room for improvement: higher profits; reduced prices; and even lower waiting times, targeted at passengers that have the highest demand for the service. Fortunately, the ubiquity of mobile internet and secure online financial transactions has opened the way for highly efficient on-demand transport; able to ensure that the right driver transports the right passenger at the right time.

At present, efficiently operating a fleet of on-demand vehicles remains difficult. The cause is the coupling between three key sub-problems: which passengers should be serviced by each vehicle \textit{(routing)}; what time should each passenger be picked up \textit{(scheduling)}; and how much each passenger should be charged \textit{(pricing)}. In effective on-demand transport services, the sub-problems should be solved jointly, not be decoupled.

Unfortunately, there are few computational techniques to solve the three on-demand sub-problems jointly; despite over three decades of research on related vehicle routing problems. While this might seem surprising, there is good reason: there are in fact two distinct research threads---each addressing a different niche.

The first thread is dial-a-ride services---targeted at the niche of elderly and disabled transport, as well as in low demand regions. Vehicles in fleets offering this type of service are coordinated by a single provider. Each vehicle collects passengers within requested pick-up time intervals and drops each passenger off before a requested drop-off time. Dial-a-ride services are typically heavily subsidized by governments, due to the important role they play for vulnerable members of the community; for instance, by taking an elderly woman to the hospital for a check-up. These subsidies have an important repercussion: dial-a-ride services are not-for-profit. As such, the aim of providers is to minimize costs; as opposed to maximizing profits. This means that the pricing sub-problem is not considered, and the most popular formalization known as the dial-a-ride problem focuses on the routing and scheduling sub-problems. Both centralized (see \cite{Cordeau} for an extensive survey) and decentralized agent-based \cite{Barbucha,Cubillos} approaches have been proposed.


The second thread is taxi and private hire services. Taxi services contrast with dial-a-ride services in two key ways: they are profit-motivated; and vehicles are not heavily coordinated by a single service provider. This is because the drivers are self-interested and unable to easily determine the current locations and destinations of other nearby taxis. While taxi services are profit-motivated, the most common pricing strategy is to use a fixed price-rate; i.e., the price scales as a---usually linear---function of the distance (see \cite{Yang2010} for pricing with a non-linear function of distance). As such, the price-rate does not factor in the number and demand\footnote{The notion of demand here can be easily misconstrued. We mean demand relative to factors such as cost of living and time of day. We do not mean in terms of need; in particular, we believe that pricing should not be manipulated in emergency situations.} of passengers that have ordered a ride. Moreover, the fixed price-rate means that the pricing sub-problem is decoupled from routing and scheduling; instead, the focus is on reducing waiting times and travel distance (and hence reducing costs) \cite{Seow,Bai,Yang2011,Balan,Glaschenko}. We note that the approach in \cite{Oren2014} does optimize pricing for single taxi operation via dynamic programming; however, scheduling and routing for a fleet of taxis is not considered.

In essence, algorithms for the joint solution of the three on-demand sub-problems have not been developed in either of these well-established threads of on-demand transport research. As such, in this paper, we introduce a new thread---made practical with recent technological developments---where joint solution is feasible.

\subsection{A New Thread}

The aim of the new thread of on-demand transport services is profitability and efficiency when drivers are coordinated (as in dial-a-ride services) and providers are profit-motivated (as in taxi services). It is worth noting that this category only recently arose as more drivers and passengers have adopted internet-enabled smartphones, allowing for providers to easily keep track of demand statistics and for passengers to directly pay the provider (instead of the driver).

Looking from the computational perspective, standard algorithms for dial-a-ride or taxi services cannot be directly applied. This is due to the fact that the first two threads decouple pricing from routing and scheduling. As such, to solve the three sub-problems jointly, new computational techniques are required.

In this paper, we propose a new market-based approach for the new thread of on-demand transport services. In fact, we jointly solve the three sub-problems: routing; scheduling; and pricing. There are two fundamental aspects of our approach: a new passenger model; and a new market mechanism. The passenger model goes beyond the standard approach in the dial-a-ride problem by providing a realistic probabilistic model for each passenger's expectations---ultimately allowing the provider to tailor journey offers to the passengers that request the service. Our new approach significantly extends our initial work in \cite{Egan2014} by enriching the passenger models, and improving the joint scheduling, routing and pricing algorithm to allow for deviations from requested journeys.

\subsection{Modeling Passenger Expectations}

In traditional approaches to the dial-a-ride problem, as well as taxi routing and scheduling, a passenger is simply a request; that is, pick-up and drop-off times, and locations. By viewing a passenger in this way it is not possible to optimize the price for each passenger. This is due to the fact that the service provider must also account for passenger preferences; in other words, how likely a passenger is to accept a journey.

To enrich the passenger model, we account for the probability that a passenger will accept an offer, on top of her request. Two key factors are considered: the price of the offer; and the journey deviation. The journey deviation corresponds to the difference between the time that passengers request pick-up and drop-off, and the actual times. This passenger model forms the basis for price optimization in our new mechanism for scheduling, routing, and pricing.

\subsection{A Market Mechanism for the New Thread}

At its heart, the new thread of on-demand transport services is multiple, independent, passengers ordering transportation from a service provider with multiple vehicles. A natural way forward is to use a market, as markets, by definition, exchange goods or services for money.

In this paper, we propose a  market mechanism for the new thread of on-demand transport services---profit-motivated service providers with coordinated drivers. Our market mechanism is designed to jointly route and schedule vehicles, and price passengers. In particular, we introduce a four-stage mechanism, initiated by the service provider generating an offer for passengers, and ending with passengers making a final decision of whether to accept or reject.

It is worth pointing out the key difference between our market mechanism and other negotiation mechanisms used for on-demand transport services. That is, our mechanism ultimately schedules, routes, and \textit{prices} passengers. This is not the case in other approaches. The reason is that the other negotiation mechanisms are used for, in a decentralized fashion, scheduling and routing vehicles. However, there is no consideration of how to optimally price passengers. On the other hand, we are able to optimize passenger pricing (in addition to routing and scheduling vehicles) due to our enriched passenger model, which captures passenger expectations.

\subsection{Key Contributions}

In this paper, we introduce a new thread of on-demand transport services---profit-motivated providers and coordinated drivers---and a market mechanism to route, schedule, and price passengers. We summarize our key contributions as follows:
\begin{enumerate}
\item \textbf{Agent-based passenger modeling:} We introduce new models for passengers that are enriched to include expectations for price and deviations from requests, on top of the standard request-based model.
\item \textbf{Market mechanism:} We propose a new market-mechanism to jointly schedule, route, and price passengers. Offers are generated for passengers via an expected profit maximization algorithm. We also analyze the effect of varying the time between when the mechanism is run, which leads to closed-form analytical expressions that provide key design insights.
\item \textbf{Simulation results:} The on-demand transport services considered in this paper are a new category and as such there is a genuine need for a business case. To this end, we perform a simulation study to evaluate the potential for profit, and also the efficient service of passengers. We demonstrate that incorporating passenger expectations in our new passenger models can in fact improve profitability over standard fixed price-rate approaches, while maintaining comparable efficiency.
\end{enumerate}

\section{Modeling Agents}\label{sec:agent_model}

Consider the network consisting of a single on-demand transport service provider and $N$ passengers. The service provider owns a fleet of $K$ \textit{unit capacity} vehicles that all start and finish their journeys at a common depot; each vehicle traveling with average velocity $\nu$. Each passenger has requested pick-up and drop-off locations, which are represented by elements from the set of vertices $V$ in a directed graph $G$. The directed graph $G$ represents the underlying road network. As such, the set of edges $E$ in $G$ represent direct routes between locations in $V$.

Associated to each edge $e \in E$ (corresponding to direct routes between locations) are:
\begin{enumerate}
\item a start location $u \in V$;
\item an end location $w \in V$;
\item a cost $c_e \in [0,\infty)$ to the service provider to traverse edge $e \in E$;
\item and an edge traversal time $\tau_e \in \mathbb{Z}_+$.
\end{enumerate}
The edge cost $c_e$ and the edge traversal time $\tau_e$ are found during pre-processing where the service provider solves the shortest path problem between $u$ and $v$ on the underlying road network. Note that when edge $e$ connects the vertices $u$ and $w$, the traversal time is denoted by $\tau_{u,w}$. This model is appropriate when drivers are salaried or pay a fixed commission, which can be easily incorporated into the edge cost $c_e$.

So far, we have simply described the basic service provider and passenger model used to model dial-a-ride services. In order to enrich the model for the new thread on-demand services, we need to introduce two new features to the model:
\begin{enumerate}
\item passengers capable of making a decision whether or not to accept a journey offer;
\item and a service provider capable of evaluating the probability a passenger will accept her offer.
\end{enumerate}

\subsection{Passengers}

The first step in modeling passengers is descriptive. In particular, we identify two stylized facts, which we believe hold for passengers using on-demand transport services. The concept of stylized facts has been widely used in computational macroeconomics as a means of validating descriptive models of real markets \cite{Cont2001}. Based on the stylized facts, we develop a passenger policy, which is used to model how passenger decide whether or not to accept a journey.

Before introducing the stylized facts and the passenger policy, we define the parameters that determine passenger behavior. First, a request from passenger $i$ consists of:
\begin{enumerate}
\item a pick-up location $v_{i,p} \in V$;
\item a drop-off location $v_{i,d} \in V$;
\item a pick-up time interval $(a_i,b_i) \in \{0,1,2\ldots\} \times \{0,1,2,\ldots\}$, with $a_i \leq b_i$;
\item and a latest drop-off time $l_i \in \{0,1,2,\ldots\}$, with $l_i > b_i$.
\end{enumerate}
We note that we only explicitly consider the latest drop-off time (instead of a drop-off time interval) as the earliest pick-up time determines the earliest drop-off time.

In response to the requests, the service provider offers a journey to each passenger, which consists of two components: the deviation of the journey from the request; and the price of the journey. More precisely, the deviations are defined as follows.
\begin{definition}\label{def:deviation}
Let $T_i$ be the actual pick-up time and $L_i$ be the actual drop-off time, for passenger $i$. The \textit{pick-up interval deviation}, denoted by $\gamma_{p,i}$, is defined as
\begin{align}
\gamma_{p,i} = \left\{
             \begin{array}{ll}
               a_i - T_i, & \textrm{if}~T_i < a_i \\
               T_i - b_i, & \textrm{if}~T_i > b_i \\
               0, & \textrm{otherwise.}
             \end{array}
           \right.
\end{align}
Similarly, the \textit{drop-off time deviation}, denoted by $\gamma_{d,i}$, is defined as
\begin{align}
\gamma_{d,i} = \left\{
             \begin{array}{ll}
               L_i - l_i, & \textrm{if}~L_i> l_i \\
               0, & \textrm{otherwise.}
             \end{array}
           \right.
\end{align}
The \textit{deviation} is then given by $\delta_i = \gamma_{p,i} + \gamma_{d,i}$.
\end{definition}

The next step towards the passenger policy is to introduce two stylized facts for passenger behavior. Stylized facts---widely used in computational economics \cite{Tesfatsion2001}---are important as they provide a means of justifying the passenger policy as a plausible description of real passenger decision-making relevant to on-demand transport.

Our first stylized fact is as follows.
\begin{stylized}\label{sf:1}
The maximum price a given passenger will pay for a journey and her maximum deviation $\delta$ do not vary significantly for passengers that regularly use on-demand transport services. In other words, these parameters will in general vary from passenger to passenger, but not for the same passenger.
\end{stylized}

We can justify this stylized fact by observing that passengers change transportation habits with great difficulty \cite{Verplanken2006}. As such, we expect that regular on-demand users will have a well-defined maximum price they are prepared to pay.

Next, we have the second stylized fact.
\begin{stylized}\label{sf:2}
The probability that a passenger will accept an offer decreases when either the price increases with the deviation fixed, or the deviation increases with the price fixed.
\end{stylized}
More colloquially, this stylized fact is a formal statement of the intuitive notion that if there is a better deal, more passengers will accept.

Based on the first stylized fact, we propose the following descriptive model for passenger policies used to determine whether or not to accept an offer:
\begin{enumerate}
\item If $r < r_{\max}$ and $\delta < \delta_{\max}$, then the passenger will accept;
\item Otherwise, then the passenger will reject.
\end{enumerate}
Observe that this policy ensures that the first stylized fact holds. Moreover, our descriptive model does not violate the second stylized fact, as the policy is concerned with individual passenger decisions and the stylized fact is concerned with the aggregate. We also note that in principle $\delta_{\max}$ can depend on $r_{\max}$.

\subsection{Service Provider}

The aim of the service provider is to schedule, route and price passengers to maximize its expected profit. In the literature, the price-rate has been commonly assumed to be fixed, and only the scheduling and routing were optimized. This means that there is no uncertainty and the expected profit can be maximized by minimizing the cost. The key new feature that we introduce for the service provider is a probabilistic model of each passenger. As we show in Section~\ref{sec:proposed_approach}, this extra information improves the expected profit over standard approaches.

We now detail the passenger model that the service provider uses. Importantly, we also show that it satisfies both the stylized facts, which means that it is a plausible probabilistic descriptive model of passengers.

In order for the service provider to infer demand at a given price, it requires the probability each passenger will accept her offer. It is necessary to consider the \textit{probability} an offer is accepted as the service provider does not perfectly know the maximum deviations and price that any given passenger will accept. In particular, the probability passenger $i$ accepts her offer is given by
\begin{align}\label{eq:prob_accept}
\mathrm{Pr}(i~\mathrm{accept}) = \mathrm{Pr}(\delta_i \leq \delta_{i,\max},r_i \leq r_{i,\max}).
\end{align}

\begin{remark}
We emphasize that the realizations of the maximum deviation $\delta_{i,\max}$ and the maximum price rate $r_{i,\max}$ are not known to the service provider, only to passenger $i$.
\end{remark}

To obtain the probability that any given passenger accepts, we assume that the service provider knows the joint probability density function $f(\delta_{i,\max},r_{i,\max})$. Our assumption that the service provider has statistical knowledge of $(\delta_{i,\max},r_{i,\max})$ ensures that each passenger is not always charged at the maximum possible price she is prepared to pay--realistic in competitive profit-motivated on-demand transport services. On the other hand, statistical knowledge is enough to enable the service provider to optimize the expected profit, as we detail in Section~\ref{sec:proposed_approach}.

The density function $f(r_{i,\max},\delta_{i,\max})$ will typically depend on factors such as time of day, or the location of the service region. We focus on scenarios where the maximum deviations and price-rate are independent, which occur when the factors determining the maximum deviations and the financial factors affecting the price-rate are unrelated. An example is the taxi spot market, where only a small deviation is acceptable. In these scenarios, the density function is separable; i.e.,
\begin{align}
f(r_{i,\max},\delta_{i,\max}) = f_r(r_{i,\max})f_{\delta}(\delta_{i,\max})
\end{align}
We model $f_r(r_{i,\max})$, $f_{\delta}(\delta_{i,\max})$ via the scaled Beta distribution with parameters $(\alpha_r,\beta_r)$, $(\alpha_{\delta},\beta_{\delta})$ respectively. The reason for this is that the Beta distribution is a flexible distribution, which generalizes a wide variety of distributions with bounded support. The density functions are given by
\begin{align}\label{eq:beta_density}
f_r(r_{i,\max}) &= \frac{1}{r_uB(\alpha_r,\beta_r)} \left(\frac{r_{i,\max}}{r_u}\right)^{\alpha_r-1}\left(1 - \frac{r_{i,\max}}{r_u}\right)^{\beta_r - 1},\notag\\
f_{\delta}(\delta_{i,\max}) &= \frac{1}{\delta_uB(\alpha_{\delta},\beta_{\delta})} \left(\frac{\delta_{i,\max}}{\delta_u}\right)^{\alpha_{\delta}-1}\left(1 - \frac{\delta_{i,\max}}{\delta_u}\right)^{\beta_{\delta} - 1},
\end{align}
where $B(\alpha,\beta)$ is the Beta function, and the densities have support $[0,r_u]$ for $r_{i,\max}$ and $[0,\delta_u]$ for $\delta_{i,\max}$.

It is easy to see that both stylized facts are features of the passenger model that the service provider uses. In particular, as $r_{i,\max}$ and $\delta_{i,\max}$ increase, the probability that a passenger will accept is reduced. This means that both the descriptive passenger models we have detailed will exhibit the features described by the stylized facts.

\subsection{On Prediction}

So far, we have argued that our descriptive passenger models satisfy the stylized facts exhibited by real passengers. However, at this point these stylized facts have only been shown to hold under current conditions; that is, the present fixed price-rate approach. The remainder of this paper is concerned with a new routing, scheduling, and pricing approach. Clearly this is a structural change and as such, it is necessary to justify that the stylized facts still hold. This is due to the fact that we are predicting the performance of a socio-technical system under structural changes; known to be notoriously hard to do in many economic settings \cite{Tesfatsion2001}.

Fortunately, the stylized facts (Stylized Facts~\ref{sf:1} and \ref{sf:2}) are independent of how the service provider performs scheduling, routing, and pricing. This suggests that the stylized facts should hold even when the service provider changes the underlying algorithms and indeed the market. Despite this, it is possible that the parameters of the distributions for the maximum price-rate and deviation may vary when the pricing algorithm is changed. These variations can potentially be overcome by updating the parameters $\alpha_r,\beta_r,\alpha_{\delta},\beta_{\delta}$ to appropriately model the passengers' preferences.

The invariance of the stylized facts to service provider scheduling, routing, and pricing, suggests that key features arising from analysis of our model will be consistent with real-world practice.

\section{Our Proposed Market Mechanism}\label{sec:proposed_approach}

In this section, we propose a new market-based approach for scheduling, routing, and pricing in the new thread of on-demand transport services. We first detail the desiderata that our design should fulfill. We then overview the proposed market mechanism, and detail step-by-step the interactions between the service provider and each passenger.

\subsection{Design Desiderata}

The design of on-demand transport services is constrained by the physical (i.e., the vehicle fleet) and financial resources of the service provider, and the expectations of passengers. Ultimately, these constraints determine whether or not the provider is financially viable.

To ensure that the resource constraints of the provider and the expectations of passengers are satisfied, the design of our market mechanism is guided by three key desiderata:
\begin{enumerate}
  \item The service provider should be profitable.
  \item The passengers that desire the service the most should obtain it.
  \item There should be a simple interface between passengers and the service provider.
\end{enumerate}
Our first two desiderata are to ensure that the service provider is profitable and that vehicles are allocated to the passengers that most value the service, while the third desiderata brings our design in line with current trends that simplify access to transportation services.

We point out that our second desiderata is closely related to the standard notion of efficiency in mechanism design \cite{Shoham2009}; that is, an efficient mechanism allocates the service to the passengers that are prepared to pay the most for it. In particular, we can formalize the value of a journey to a given passenger as follows.
\begin{definition}\label{def:efficiency}
  Denote the the maximum price for passenger $i$ as $p_{i,\max}$. If the set of serviced passengers is $S$, then the efficiency of our market mechanism is then defined as
  \begin{align}
    \mathcal{E} = \sum_{i \in S} p_{i,\max}.
  \end{align}
\end{definition}

Our third desiderata is to bring our mechanism in line with current trends towards simplifying access to transpotation. The key impact of this desiderata on our approach is that passengers are not required to price their own journey; instead, the service provider always generates the first offer. This means that passengers do not need to be aware of the behavior of other passengers or how the service provider allocates vehicles. The only decision each passenger needs to make is whether the journey offer is acceptable or not, which in our model is determined by the deviation and price of the offered journey.

\subsection{Overview}

The three key design desiderata motivate a market mechanism, where passengers are not required to price their own journeys and the service provider generates offers that maximize its profits. In our approach, we propose the following mechanism structure:
\begin{enumerate}
  \item The service provider makes each passenger an initial offer. The offer to each passenger is based on vehicle allocations and passenger pricing done by the service provider to maximize its average profit.
  \item Each passenger responds to the initial offer by either rejecting or conditionally accepting the offer. Conditional acceptance is a contract between the passenger and the service provider, which requires the passenger to pay the amount offered unless either the price is raised or the deviation from the requested journey increases.
  \item The service provider computes the final vehicle journey plans and passenger pricing.
  \item The passenger either rejects the offer (due to the service provider breaking the contract) or unconditionally accepts, where the passenger is required to pay the provider for the service.
\end{enumerate}

A key aspect of the mechanism is that the service provider generates initial offers for each passenger via optimization of the average profit. There are two sets of variables associated to the optimization problem. First is the set of passengers allocated to the same vehicle, which leads to the allocation $C = \{C_1,\ldots,C_K\}$ with $C_i$ corresponding to the passengers allocated to the $i$-th vehicle. Second is the price-rate, $r$ that each passenger will pay. Formally, the optimization problem is
\begin{equation}\label{eq:expected_profit}
\begin{aligned}
& \underset{C_1,\ldots,C_K,r}{\text{maximize}}
& & \sum_{S \subset N} \left(\sum_{i \in S} rR_i - c_i\right) \prod_{i \in S} \mathrm{Pr}(i~\mathrm{accept}) \prod_{j \in S^c} \left(1 - \mathrm{Pr}(j~\mathrm{accept})\right) \\
& \text{subject to}
& & 0 \leq r \leq r_u,
\end{aligned}
\end{equation}
where $\mathrm{Pr}(i~\mathrm{accept})$ is given by (\ref{eq:prob_accept}), $r$ is the price-rate offered to each passenger, and $c_i$ is the cost of servicing passenger $i$.

It is important to note that this optimization problem generalizes the standard formulation for dial-a-ride services. In particular, we optimize over the pricing (encoded in $r$) and the scheduling and routing (encoded in the sets $C_1,\ldots,C_K$). In contrast, minimum cost scheduling and routing for dial-a-ride services does not consider pricing; i.e., the price-rate $r$ is assumed to be fixed. As such, there is potential for higher expected profit than simply using a minimum cost approach.

We also point out that irrespective of how the scheduling and routing is performed, using our approach in (\ref{eq:expected_profit}) will always lead to a higher expected profit. This is particularly important in the case that it is not tractable to optimally solve the clustering problem and heuristics are required.

While optimally scheduling, routing, and pricing passengers will yield a higher expected profit, it is not straightforward to solve. In particular, the problem in (\ref{eq:expected_profit}) is difficult for two reasons:
\begin{enumerate}
  \item The objective is generally nonlinear and also non-convex.
  \item The number of sets $C_1,\ldots,C_K$ to be searched is even greater than in the standard approaches as there are no feasibility constraints that constrain the passengers able to be serviced by a given vehicle. This occurs because we allow deviations from passenger requests.
\end{enumerate}

To alleviate the two difficulties in solving (\ref{eq:expected_profit}), we adopt the ``cluster-then-price" strategy; that is, we solve the problem (\ref{eq:expected_profit}) in two stages. In the first stage we cluster passengers into groups that are all served by the same vehicle. We note that the cluster formation is not a straightforward extension of standard approaches (e.g., \cite{Dumas}) due to the profit-based objective. We detail our clustering algorithm in Section~\ref{sec:stage_1}. We then show how the price-rate offered to each passenger is obtained by pricing the passengers based on the clustering. The remainder of section details the interactions between the service provider and passengers in our mechanism--including how final vehicle allocations and prices are computed.

Our proposed market mechanism is summarized in Fig.~\ref{fig:overview}, where dependencies are illustrated by arrows between each stage.

\begin{figure}[!h]
\centerline{\includegraphics[height=4in]{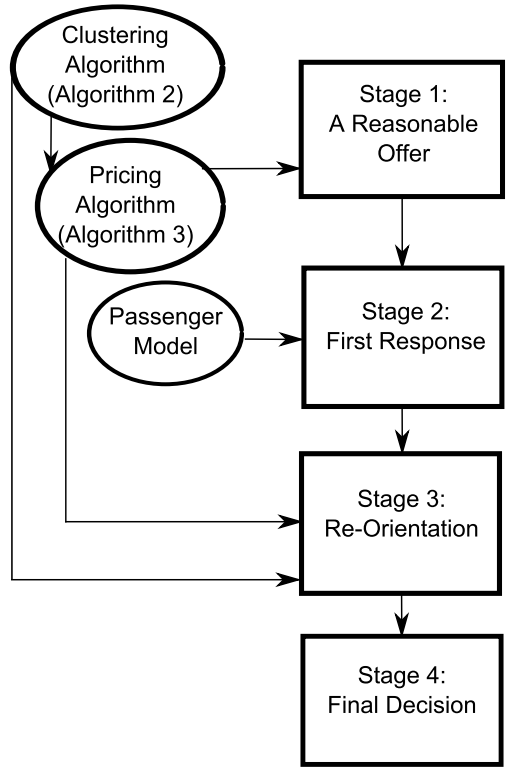}}
\caption{Overview of the proposed market mechanism for pricing and vehicle allocation in on-demand transport networks.} \label{fig:overview}
\end{figure}

\subsection{Stage 1: A Reasonable Offer}\label{sec:stage_1}

In the first stage of the mechanism, the service provider generates an offer for each passenger $i$. The offer consists of:
\begin{enumerate}
\item maximum deviation, $\gamma_i$;
\item and the price, $p_i = rR_i$, where $R_i$ is the distance between pick-up and drop-off locations for passenger $i$.
\end{enumerate}

To allow deviations from passenger requests during clustering, we introduce a probabilistic feasibility constraint, which is closely related to the probability that a passenger will accept the journey. The constraint is defined as follows.
\begin{definition}\label{def:eps_feasible}
A journey is $\epsilon$-feasible for passenger $i$, if
\begin{align}
\mathrm{Pr}(\delta_{\max} \geq \delta_i) \geq 1 - \epsilon,
\end{align}
where $\delta_{\max}$ is a random variable representing the unknown maximum deviation for passenger $i$. Moreover, a cluster is said to be $\epsilon$-feasible if each passenger in the cluster is $\epsilon$-feasible, with the set of $\epsilon$-feasible clusters denoted by $\mathcal{F}_{\epsilon}$.
\end{definition}

Intuitively, the notion of $\epsilon$-feasible allocations generalizes the standard notion of feasible clustering. That is, standard hard constraint clustering is $0$-feasible; i.e., each passenger must have a feasible journey with probability one. Importantly, $\epsilon$-feasible allocations are a key feature in our approach, as by relaxing the feasible constraints it is possible for the service provider to offer journeys with deviations from requests to passengers. Ultimately, this can allow the provider to service more passengers with a single vehicle, which (as we show in Section~\ref{sec:simulation}) leads to higher profits and efficiency of the market mechanism.

\textit{Passenger clustering: }We now develop a clustering algorithm that minimizes the cost of clusters subject to the probabilistic feasibility constraint detailed in Definition~\ref{def:eps_feasible}. In order to form passenger clusters, we determine whether or not the addition of a passenger to a cluster is $\epsilon$-feasible. This is achieved by computing the times that the inserted passenger will be picked up and dropped off. To reduce the complexity of the insertion algorithm, a passenger can be inserted only if the new passenger does not affect the deviations of the passengers already in the cluster, which means that previously clustered passenger pick-up and drop-off times are not changed after the passenger has been allocated to a cluster.

To illustrate, consider a passenger $0$ to be inserted into cluster $C_j = \{k_1,\ldots,k_{|C_j|}\}$, where passengers in $C_j$ are in the order that they are serviced. There are three situations to consider: insert passenger $0$ before passenger $k_1$; insert passenger $0$ between passengers $k_l$ and $k_{l+1}$, with $1 \leq l < |C_j|$; and insert passenger $0$ after passenger $k_{|C_j|}$.

To determine whether a passenger can potentially be inserted into the cluster, there are three checks that need to be performed:
\begin{enumerate}
  \item[\textbf{C1}:] Can the passenger be inserted without changing the deviation of previously clustered passengers?
  \item[\textbf{C2}:] If C1 holds, what are the pick-up and drop-off times with the minimum deviation (i.e., $\delta_i$)?
  \item[\textbf{C3}:] Is the cost of traveling to the passenger's pick-up location and from the drop-off location less than any other previously checked cluster?
\end{enumerate}
The checks are performed using the insertion algorithm detailed in Algorithm~\ref{alg:insertion}. In particular, the algorithm attempts to insert passengers such that the new passenger does not change the deviations of previously clustered passengers---corresponding to \textbf{C1}. In the case that it is possible to insert the passenger without affecting other clustered passengers, the insertion algorithm determines whether the passenger is $\epsilon$-feasible and computes the pick-up and drop-off times to minimize the deviation (implementing \textbf{C2}). Finally, the cost is computed and compared with previously checked clusters, and the lowest cost, $c_{best}$, insertion is updated (based on \textbf{C3}).

The minimum deviation pick-up and drop-off times (for \textbf{C2}) are computed using simple inequality tests due to the fact that the objective is linear; i.e., it is $\delta = \gamma_{p} + \gamma_{d}$ (from Section~\ref{sec:agent_model}). In particular, there are three types of insertion tests: before the first passenger in the cluster; between passengers $j$ and $j+1$; and after the last passenger. To illustrate, consider the potential insertion of passenger $0$ between passengers $j$ and $j+1$. First, the new passenger's journey must fit. This means that $L_j + \tau_{j,0} + \tau_{0,0} + \tau_{0,j+1} < T_{j+1}$. Next, we choose the minimum deviation insertion. This is done by checking whether $T_{j+1} - \tau_{0,j+1} \leq l_0$, which means that the deviation can only be caused by the pick-up time. The pick-up time is then chosen to minimize the deviation. The case where $T_{j+1} - \tau_{0,j+1} > l_0$ can be treated similarly.

The cost is computed as
\begin{align}\label{eq:cost_ins}
c_{i,ins} = c_{l-1,i} + c_{i,l},
\end{align}
where passenger $i$ is inserted between passenger $l-1$ (if $l = 0$, then this corresponds to the depot), and passenger $l$ (this may also be the depot).

\begin{algorithm}[!t]
\begin{algorithmic}
\Procedure{Insertion}{$c_{best}$,$i$,$C_j$}
\State Set passenger to be inserted as $i$.
\State Compute cost $c_{i,ins}$ for cluster $\{i\}$ using (\ref{eq:cost_ins}).
\If {$c_{i,ins} < c_{best}$}
        \State $c_{best} = c_{i,ins}$; $l_{best} = l$; $j_{best} = j$.
        \EndIf
\State Set cluster $C_j = \{k_1,\ldots,k_{M}\}$, where $M = |C_j|$.
\For {$1 \leq l \leq |C_j|+1$}
      \State Perform check $\textbf{C1}$.
      \State Compute pick-up and drop-off times $t_{p}^*$ and $t_{d}^*$ (see discussion).
      \If {$\mathrm{Pr}(\delta_{\max} \geq t_{p}^*, \gamma_{\max} \geq t_{d}^*) \geq 1 - \epsilon$}
        \State Compute cost $c_{i,ins}$ using (\ref{eq:cost_ins}).
        \If {$c_{i,ins} < c_{best}$}
        \State $c_{best} = c_{i,ins}$; $l_{best} = l$; $j_{best} = j$.
        \EndIf
       \EndIf
\EndFor
\State \Return $j_{best}$, $l_{best}$, $c_{best}$.
\EndProcedure
\end{algorithmic}  \caption{Insertion algorithm.}
\label{alg:insertion}
\end{algorithm}

With the insertion algorithm in hand, it is now possible to describe our cluster formation algorithm. The cluster formation algorithm searches through the current clusters for each passenger $i$ to determine whether the passenger can be inserted. After checking each potential cluster, passenger $i$ is inserted into the lowest cost and feasible cluster, and the next passenger is considered. Our cluster formation algorithm is detailed in Algorithm~\ref{alg:clusterformation}.

\begin{algorithm}[!t]
\begin{algorithmic}
\Procedure{ClusterFormation}{}
\State Randomly choose a unique index in $\mathcal{N} = \{1,2,\ldots,N\}$ for each passenger.
\State Initialize cluster $C_1 = \{1\}$;
\State Initialize allocation $\pi = \{C_1\}$ and partition index $\mathcal{I} = \{1\}$.
\State Initialize the set of unclustered passengers $\mathcal{U} = \{2,\ldots,N\}$.
\While{$\mathcal{U} \neq \emptyset$}
\State Set $\mathcal{J} \leftarrow \mathcal{I}$
\While {$\mathcal{J} \neq \emptyset$}
\State Randomly choose an element $j \in J$.
\State Update $j_{best}$, $l_{best}$, $c_{best}$ using Insertion($c_{best}$),$i$, $C_j$) (see Algorithm~\ref{alg:insertion}).
\State $\mathcal{J} \leftarrow \mathcal{J} \setminus j$
\EndWhile
\State Update $C_{j_{best}}$ by inserting $i$ before $l_{best}$; $\mathcal{U} \leftarrow \mathcal{U} \setminus i$; $\mathcal{J} \leftarrow \emptyset$.
\EndWhile
\State \Return Cluster allocation $\pi$
\EndProcedure
\end{algorithmic}  \caption{Cluster formation algorithm with probabilistic feasibility constraints.}
\label{alg:clusterformation}
\end{algorithm}

\textit{Passenger pricing:} With the cluster formation algorithm in hand, we now introduce the joint scheduling, routing, and pricing algorithm used by the service provider to generate expected profit maximizing offers. The algorithm approximately solves the following problem:
 \begin{equation}\label{eq:pricing_opt}
\begin{aligned}
& \underset{r,\epsilon}{\text{maximize}}
& & \mathbb{E}[P(r)] \\
& \text{subject to}
& & 0 \leq r \leq r_{u}\\
& & & 0 \leq \epsilon \leq 1\\,
\end{aligned}
\end{equation}
where $\mathbb{E}[P(r)]$ is the expected profit at price rate $r$, which is given by
\begin{align}
\mathbb{E}[P(r)] = \sum_{S \subset N} \left(\sum_{i \in S} p_i - c_i\right) \prod_{i \in S} \mathrm{Pr}(i~\mathrm{accept}) \prod_{j \in S^c} \left(1 - \mathrm{Pr}(j~\mathrm{accept})\right),
\end{align}
with $\mathrm{Pr}(i~\mathrm{accept})$ given by (\ref{eq:prob_accept}). Importantly, all price rates greater than $r_{u}$ are rejected by the passengers with probability one (this follows from (\ref{eq:beta_density})). The price offered to each passenger $i$ is then given by $p_i = rR_i$, where $R_i$ is the distance from passenger $i$'s pick-up to drop-off.

Our solution clusters passengers based on the probabilistic feasibility constraint with parameter $\epsilon_1 \approx 1$ and then optimizing the price using a standard scalar nonlinear optimization algorithm (i.e., a descent algorithm). This is repeated for parameter $\epsilon_{k+1} = \epsilon_{k} - \epsilon_{step}$ until $\epsilon_{k+1} < 0$, with $0 < \epsilon_{step} \leq 1$. The pricing and clustering solution that maximizes the expected profit over probabilistic feasibility parameters $\epsilon_1,\epsilon_2,\ldots$ is then chosen. As such, at the end of Stage 1, the price rate $r$ and passenger clusters $C_1,C_2,\ldots$ are obtained by the service provider. The procedure is summarized in Algorithm~\ref{alg:stage1}.

\begin{algorithm}[!t]
\begin{algorithmic}
\Procedure{ExpectedProfitMaximization}{}
\State Initialize $\epsilon$ and set $P_{opt} \leftarrow 0$.
\While {$\epsilon > 0$}
\State Set $\epsilon \leftarrow \epsilon - \epsilon_{step}$.
\State Solve (\ref{eq:pricing_opt}) to obtain the optimal expected profit $P^*$, with $\epsilon$ fixed.
\If {$P^* > P_{opt}$}
\State Set $P_{opt} \leftarrow P^*$; $r_{opt} \leftarrow r$.
\State Set $C_{opt} \leftarrow \{C_1,C_2,\ldots\}$ (corresponding to clusters from Algorithm~\ref{alg:clusterformation}).
\EndIf
\EndWhile
\State \Return $P_{opt},r_{opt},C_{opt}$.
\EndProcedure
\end{algorithmic}  \caption{Joint pricing and clustering algorithm to maximize the expected profit for Stage 1 of our market mechanism.}
\label{alg:stage1}
\end{algorithm}

The final output of the expected profit maximization algorithm is an offer consisting of a journey deviation and price for each passenger. This offer is then communicated to the passenger, and the service provider waits for the passengers response, given in the next stage.

\subsection{Stage 2: First Response}

In the second stage of the negotiation, each passenger makes a preliminary decision to accept or reject the \textit{conditional journey} offered by the service provider. If passenger $i$ accepts, it means that she has accepted the journey offer as long as the service provider does not change the offer. On the other hand, if passenger $i$ rejects the offer then she is no longer interested in a journey with the service provider.

We emphasize that if the passenger accepts, then she has agreed to a contract with the service provider; that is, she must pay for the service unless the service provider either raises the price or increases the journey deviation. We note that this type of contract is standard for other transportation services, such as pre-booked trains or buses.

\subsection{Stage 3: Re-Orientation}

In the third stage of the negotiation, the service provider has additional information. In particular, the service provider knows both what the users that conditionally accept are prepared to pay, and which users have rejected the offer.

As not all passengers will usually accept their offer, the service provider must update the passenger clusters. Although the cluster sizes will change if not all passengers accept, no passengers are allocated to different clusters. This ensures that the maximum journey deviation for each passenger does not change.

To obtain final prices for each passenger, the service provider solves the maximin profit problem over the passengers that have accepted in the previous stage. Let $S^*$ be the passengers that accepted their offers in the previous stage and $Q$ the final profit obtained from the market mechanism. The maximin profit problem is then
\begin{align}\label{eq:maximin}
\max_{\{r_i\}_{i \in S^*}} \min Q.
\end{align}
We note that in this stage of the mechanism, different passengers can be offered journeys at different price rates.

As the previous stages of the mechanism have revealed a lower bound on the maximum price each user is prepared to pay, the maximin profit problem in (\ref{eq:maximin}) is equivalent to finding the set of passengers
\begin{align}
S^*_c = \arg \max_{S_c \subset S^*} \sum_{k \in S_c} rR_k - c_{S_c},
\end{align}
and then pricing passengers such that the passengers in $S^*_c$ are charged $rR_k,~k \in S_c^*$, while the other passengers are charged at a higher price rate. This allows these passengers to find another service or for the provider to subcontract the journeys, which avoids losses.

The service provider uses the pricing strategy in Stage 3 to ensure that the passengers in desirable clusters accept, which in turn maximizes the service provider's profit. That is, the service provider will raise the price of passengers that have conditionally accepted so that they do not require the provider to be exposed to large losses. Although undesirable from the perspective of service provider reputation, we believe that this strategy is likely to be necessary in real-world practice. This is due to the fact that service providers have both physical (i.e, fleet size) and financial (i.e., initial capital) constraints. As such, it is not possible to service all passengers that might accept without either investing in a larger fleet size or hiring additional vehicles. To cope with these additional costs, it is necessary for passengers to be charged more when they are difficult to serve.

\subsection{Stage 4: Final Decision}

In the fourth stage of the negotiation, the remaining passengers make their final decision based on the latest offer from the service provider. Each passenger either \textit{unconditionally accepts} the final offer, or \textit{rejects} it; i.e, passengers that conditionally accepted in Stage 2 now either accept the offer or reject otherwise. We emphasize that a passenger cannot freely reject the offer unless the provider has increased the price; otherwise, the passenger must pay for the journey.

We point out that the service provider cannot always service all passengers that accept; either because there are not enough vehicles, or the passenger would cause a net loss for the provider. As such, it is highly desirable from the perspective of financial solvency of the provider to be able to raise the price and allow the passenger to find alternative transport. If the passenger still accepts the journey even after the price has been raised, then the service provider can use the additional revenue to hire an additional vehicle to service the passenger.

At the end of Stage 4 of the mechanism, all passengers to be serviced are known to the service provider, are priced, and have been allocated to vehicles. Moreover, each vehicle has a journey plan. The performance of our market mechanism is evaluated in Section~\ref{sec:simulation}. In the next section, we analyze the role of the time interval between mechanism runs, which determines the mechanism rate.

\section{Mechanism Parameter Design}

A key assumption in our market mechanism is that the passengers are known to the provider before the beginning of the mechanism. For dial-a-ride services this is known as the static scenario, and is often problematic as passengers can make requests after one run of the mechanism and before the next. This was solved by allowing dynamic arrivals, where passengers can be inserted while vehicles are on the road. However, the dynamic approach cannot be directly used with our mechanism without statistics for the locations and the prices dynamic passengers would be prepared to pay. While such an approach is in principle possible using historical passenger prices, and pick-up and drop-off location data, an unprecedented level of data refinement would be required.

To resolve this issue, we instead adapt the rate our market mechanism is run. Importantly, the mechanism rate is in fact a fundamental feature of any on-demand transport market mechanism. There are two key parameters that determine the mechanism rate: the probability that a passenger request is ignored; and the probability that a passenger cannot be serviced before the next mechanism run.

In this section, we derive simple analytical expressions for the probability a request is ignored, $P_{ignore}$, and the probability a passenger cannot be serviced in time, $P_{overtime}$. The key purpose of the analytical expressions are to guide design of the mechanism rate. In particular, we demonstrate the tradeoff between $P_{ignore}$ and $P_{overtime}$, as the time between mechanism runs is increased.

\subsection{Analysis}\label{sec:analysis}

Our analysis of $P_{ignore}$ and $P_{overtime}$ is based on a simplified probabilistic model of passengers and vehicles. Although the simplifications lead to a coarse approximation of real-world on-demand transport networks, conclusions from our analysis are supported by intuitive explanations. The key assumptions are as follows:
\begin{enumerate}
\item The time between mechanism runs is $T$ minutes and the corresponding rate is $R = 1/T$.
\item Each vehicle services only a single passenger per interval between mechanism runs. This is reasonable for short intervals $T$.
\item The time for a vehicle to travel a distance $z$ is given by $\nu z$, where $\nu$ is the average velocity.
\item The time each request arrives forms a homogeneous Poisson process with rate $\kappa$. Moreover, the time between request delivery and the desired pick-up time, $\Delta$, is exponentially distributed with mean $1/\lambda$.
\item Pick-up and drop-off locations are distributed according to a Poisson point process with intensity $\zeta$. Moreover, the drop-off location is the closest point to the pick-up locations, which means that the distribution of the distance is\footnote{This result follows immediately by considering the probability that there is no point within radius $R$ from the origin, which is given by $e^{-\pi R^2}$.}
    \begin{align}
      f_Z(z) = e^{-\zeta \pi z^2}2\pi \zeta z.
    \end{align}
\end{enumerate}

We now turn to analysis of $P_{ignore}$ and $P_{overtime}$. Our analysis is based on two new analytical expressions for the probabilities. First, the probability a request is ignored, $P_{ignore}$, is given in the following proposition.
\begin{proposition}\label{prop:prob_ignore}
The probability a request is ignored is given by
\begin{align}
  P_{ignore} &= 1 -\frac{1}{\lambda T}\left(1 - e^{-\lambda T}\right).
\end{align}
\end{proposition}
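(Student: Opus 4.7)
The plan is to interpret $P_{ignore}$ as the probability that the passenger's desired pick-up time falls before the next mechanism run, so that by the time the provider would consider the request, it is already stale. Because the mechanism runs periodically every $T$ minutes, without loss of generality I condition on a single interval $[0,T]$ and consider one request that arrives in this interval.

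The first step is to invoke the well-known property of the homogeneous Poisson arrival process: conditioned on a request occurring in $[0,T]$, its arrival time $U$ is uniformly distributed on $[0,T]$. The requested pick-up time is then $U+\Delta$, where $\Delta\sim\mathrm{Exp}(\lambda)$ is independent of $U$. A request is ignored precisely when $U+\Delta<T$, i.e., the desired pick-up occurs before the provider has the opportunity to plan it in the next run.

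The second step is to compute this probability by conditioning on $U$:
\begin{align}
P_{ignore} \;=\; \Pr(U+\Delta < T) \;=\; \int_0^T \frac{1}{T}\,\Pr(\Delta < T-u)\,du \;=\; \frac{1}{T}\int_0^T \bigl(1-e^{-\lambda(T-u)}\bigr)\,du.
\end{align}
After the change of variable $s=T-u$, the integral becomes $\int_0^T(1-e^{-\lambda s})\,ds = T - (1-e^{-\lambda T})/\lambda$, and dividing by $T$ yields exactly $1 - \tfrac{1}{\lambda T}(1-e^{-\lambda T})$, which is the claimed expression.

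There is no serious obstacle in this proof: the two modeling assumptions (Poisson arrivals, exponential pick-up lead time) do all the work, and the rest is a one-line integration. The only subtle point worth making explicit is the justification for restricting attention to a single interval of length $T$: because the mechanism is periodic and the arrival process is homogeneous, the probability that a typical request is ignored is the same in every cycle, so the per-cycle computation above coincides with the long-run fraction of ignored requests. I would state this reduction briefly before the integral, and then present the computation as above.
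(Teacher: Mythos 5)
Your proposal is correct and follows essentially the same route as the paper: both rest on the fact that, conditioned on the Poisson arrivals, the request time is uniform on the inter-run interval, and that the request is ignored exactly when the uniformly distributed arrival plus the exponential lead time $\Delta$ falls before the next run. The only cosmetic difference is that you condition on the uniform arrival time and integrate directly, whereas the paper conditions on $\Delta$ and integrates by parts; the computation is the same and yields the stated expression.
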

\begin{proof}
See Appendix~\ref{app:prob_ignore}.
\end{proof}

Observe that as $T \rightarrow \infty$, $P_{ignore} \rightarrow 1$, which means that as the interval between mechanism runs increases, the probability a request is ignored tends to one, which is consistent with intuition. On the other hand, as $T \rightarrow 0$, $P_{ignore} \rightarrow 0$.

The probability that a passenger cannot be serviced before the next mechanism run, $P_{overtime}$, is given in the following proposition.
\begin{proposition}\label{prop:long_service}
The probability a passenger cannot be serviced before the next mechanism run is given by
\begin{align}
  P_{overtime} &= \left(1 - \frac{2\pi\zeta T}{\nu}\right)e^{-\pi\zeta T^2/\nu^2} + \frac{\nu}{T\sqrt{\zeta}}\left(\frac{1}{2} - Q\left(\frac{T\sqrt{2\pi \zeta}}{\nu}\right)\right)
\end{align}
where $Q(\cdot)$ is the $Q$-function, a standard special function defined in (\ref{eq:Q_func}).
\end{proposition}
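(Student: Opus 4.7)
The plan is to express $P_{\text{overtime}}$ as an expectation over the random pick-up-to-drop-off distance $Z$ (equivalently the random service time $W=\nu Z$), and then reduce the resulting Gaussian-type integrals to the exponential and $Q$-function forms appearing in the statement. Since $f_Z(z) = 2\pi\zeta z \, e^{-\pi\zeta z^2}$ is of Rayleigh type, this is a promising structure: the tail of $Z$ produces exactly the factor $e^{-\pi\zeta T^2/\nu^2}$, while truncated moments of $Z$ produce Gaussian integrals that can be re-expressed via $Q(\cdot)$.

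First, I would formalize the event ``passenger cannot be serviced before the next mechanism run.'' Under the simplifying assumptions (each vehicle services a single passenger in the interval of length $T$, travel time of distance $z$ equal to $\nu z$, and a pick-up epoch lying somewhere inside $[0,T]$), this event splits into two disjoint cases: (i) the trip alone is too long, i.e.\ $\nu Z > T$; and (ii) the trip fits within the window in length, but starts late enough that $\nu Z > T - \tau$ for the pick-up time $\tau$. The first case contributes $P(Z > T/\nu) = e^{-\pi\zeta T^2/\nu^2}$ immediately by the footnoted void-probability argument. This is what anchors the $e^{-\pi\zeta T^2/\nu^2}$ factor of the first term.

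Next, I would evaluate the second contribution. Writing it as $\int_0^{T/\nu} P(\tau > T - \nu z)\, f_Z(z)\, dz$ under the pick-up-epoch assumption and substituting the Rayleigh-type density, one integration by parts (with $u$ a linear function of $z$ and $dv = 2\pi\zeta z\, e^{-\pi\zeta z^2}dz$ so that $v = -e^{-\pi\zeta z^2}$) yields two pieces: a boundary term proportional to $T\,e^{-\pi\zeta T^2/\nu^2}$, which combines with the first case to give the $(1 - 2\pi\zeta T/\nu)\,e^{-\pi\zeta T^2/\nu^2}$ factor; and a residual Gaussian integral of the form $\int_0^T e^{-\pi\zeta t^2/\nu^2}\, dt$. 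The latter I would convert via the substitution $s = t\sqrt{2\pi\zeta}/\nu$ into $(\nu/\sqrt{2\pi\zeta})\int_0^{T\sqrt{2\pi\zeta}/\nu} e^{-s^2/2}\, ds$, and then rewrite this truncated half-Gaussian as $\sqrt{2\pi}\bigl(\tfrac12 - Q(T\sqrt{2\pi\zeta}/\nu)\bigr)$ using the definition of $Q$ in~(\ref{eq:Q_func}). Collecting constants gives the prefactor $\nu/(T\sqrt{\zeta})$ on the second term of the claimed formula.

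The main obstacle will be fixing the correct event decomposition, i.e.\ choosing the right convention for when the passenger is picked up inside $[0,T]$ and for where the vehicle begins its trip relative to the pick-up location. Different modelling choices (uniform pick-up epoch versus a pick-up epoch tied to the exponential delay $\Delta$ from Proposition~\ref{prop:prob_ignore}, and depot-start versus nearest-PPP-point start for the vehicle) change the constants in front of the exponential and the argument of $Q$. Once the decomposition that yields exactly the coefficient $2\pi\zeta T/\nu$ in the first term and the argument $T\sqrt{2\pi\zeta}/\nu$ in the $Q$-function is pinned down, the rest is a routine integration by parts plus the standard Gaussian substitution described above, with no estimates beyond elementary calculus.
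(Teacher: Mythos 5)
Your setup is exactly the paper's: condition on the trip distance $Z$, use the uniform distribution of the pick-up epoch within the window of length $T$ to write
\begin{align}
P_{overtime} \;=\; \frac{\nu}{T}\int_0^{T/\nu} z\,f_Z(z)\,dz \;+\; \int_{T/\nu}^{\infty} f_Z(z)\,dz,
\end{align}
then integrate by parts with $v=-e^{-\pi\zeta z^2}$ and finish with the substitution $u=\sqrt{2\pi\zeta}\,z$ and the definition of $Q$. Your handling of the residual Gaussian integral, giving $\frac{\nu}{T\sqrt{\zeta}}\bigl(\tfrac12-Q(T\sqrt{2\pi\zeta}/\nu)\bigr)$, is correct and matches the paper's final step.

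The gap is the step where you assert that the boundary term of the integration by parts ``combines with the first case to give the $(1-2\pi\zeta T/\nu)e^{-\pi\zeta T^2/\nu^2}$ factor.'' It does not: the boundary term is $\frac{\nu}{T}\bigl[-z e^{-\pi\zeta z^2}\bigr]_0^{T/\nu}=-e^{-\pi\zeta T^2/\nu^2}$, which exactly cancels the tail contribution $\Pr(Z>T/\nu)=e^{-\pi\zeta T^2/\nu^2}$, so the computation you set up terminates at $P_{overtime}=\frac{\nu}{T\sqrt{\zeta}}\bigl(\tfrac12-Q(T\sqrt{2\pi\zeta}/\nu)\bigr)$ with no first term at all. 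No choice of ``convention'' or event decomposition of the type you describe produces a coefficient $2\pi\zeta T/\nu$, so deferring the difficulty to ``pinning down the decomposition'' does not close the argument. (The discrepancy you sensed is real and lies in the paper, not in your calculus: the appendix carries a sign slip in its integration by parts, writing $+\frac{T}{\nu}e^{-\pi\zeta T^2/\nu^2}$ where $-\frac{T}{\nu}e^{-\pi\zeta T^2/\nu^2}$ should appear, and even that intermediate line would yield $2e^{-\pi\zeta T^2/\nu^2}$ plus the $Q$-term rather than the displayed first factor. Note also that the cancelled form is the one consistent with the limits quoted after the proposition, $P_{overtime}\to 1$ as $T\to 0$ and $P_{overtime}\to 0$ as $T\to\infty$, whereas the displayed formula tends to $2$ as $T\to 0$.) As written, your proposal papers over precisely the step where this inconsistency lives, so it cannot be counted as a derivation of the stated expression.
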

\begin{proof}
See Appendix~\ref{app:long_service}.
\end{proof}

Observe that as $T \rightarrow \infty$, $P_{overtime} \rightarrow 0$, and as $T \rightarrow 0$, $P_{overtime} \rightarrow 1$. This is an intuitive result as it simply states that when the interval is large the probability that a passenger cannot be serviced in time approaches zero. Importantly, this observation means that there is a tradeoff between $P_{ignore}$ and $P_{overtime}$. In other words, it is not possible to avoid passengers being ignored and not being serviced, with probability one using the same mechanism rate.

\subsection{Tradeoff}

\begin{figure}[!h]
\centerline{\includegraphics[height=4in]{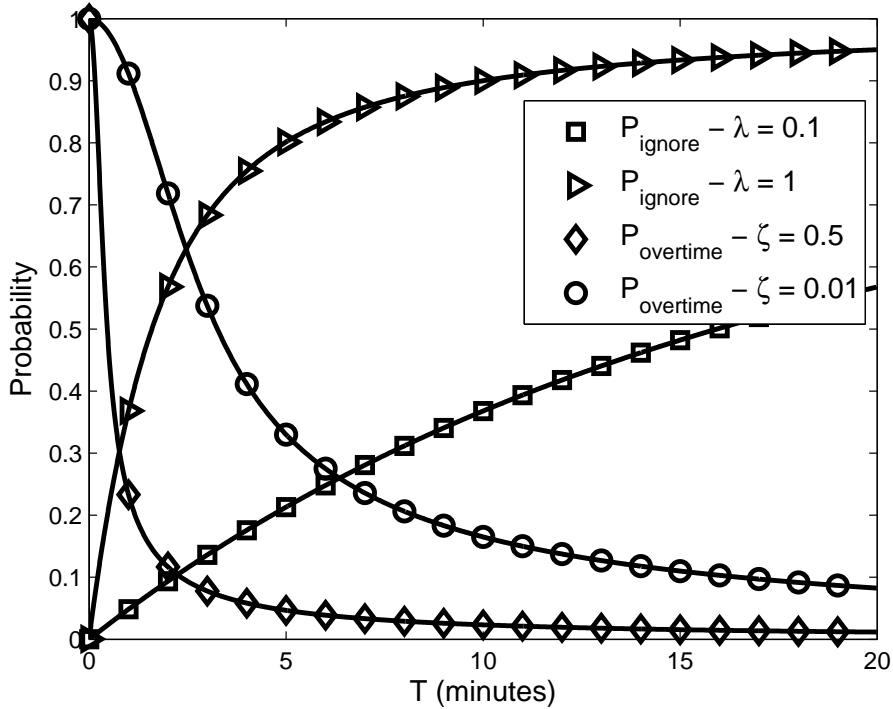}}
\caption{\textit{Plot of the tradeoff between $P_{ignore}$ (from Proposition~\ref{prop:prob_ignore}) and $P_{overtime}$ (from Proposition~\ref{prop:long_service}), for varying time between mechanism runs $T$. We assume vehicles travel at $20$ km/hr.}} \label{fig:tradeoff}
\end{figure}

It is not immediately obvious from Propositions~\ref{prop:prob_ignore} and \ref{prop:long_service}, which factors are key in determining the tradeoff between $P_{ignore}$ and $P_{overtime}$. As such, we now examine the behavior of $P_{ignore}$ and $P_{overtime}$ numerically.

Fig.~\ref{fig:tradeoff} plots the tradeoff between $P_{ignore}$ and $P_{overtime}$ for varying $\lambda$ and $\zeta$, based on our analysis in Section~\ref{sec:analysis}. Observe that increasing $\lambda$, also leads to an increase in $P_{ignore}$. On the other hand, an increase in $\zeta$ leads to increase in $P_{overtime}$. As such, the intersection between $P_{ignore}$ and $P_{overtime}$ (the crossover rate) reduces when $\lambda$ or $\zeta$ are increased.

\section{Simulation Results}\label{sec:simulation}

With our mechanism for the new thread of on-demand transport services in hand, we are ready to present an initial business case for our mechanism. The benchmark we use for comparison is the fixed price-rate approach, where passengers are charged the same price-rate, irrespective of the number of potential passengers or the requested journeys. In particular, under the fixed price-rate policy, each passenger is charged at a rate given by the expected maximum price-rate passengers are prepared to pay, given by $\mathbb{E}[r_{i,\max}]$, which can be easily obtained from (\ref{eq:beta_density}).

The remainder of this section consists of two parts. First, we compare our mechanism as detailed in Section~\ref{sec:proposed_approach} with the standard fixed price-rate policy. The performance of our mechanism is evaluated in terms of both the expected profit and the expected efficiency (based on Definition~\ref{def:efficiency}). We demonstrate key trends in these performance metrics with variations in the number of potential passengers, passenger demand, and the clustering parameter $\epsilon$ (see Definition~\ref{def:eps_feasible}). Second, we evaluate the performance of a modification of our mechanism in a scaled-up system with up to $100$ passengers. We describe this modification and demonstrate the performance as the number of potential passengers are varied, in comparison with the fixed price-rate approach.

\subsection{Key Trends}\label{sec:key_trends}

We now illustrate key trends in the expected profit and expected efficiency, based on a network setup with $K=5$ drivers and up to $N = 13$ potential passengers. Importantly, this setting is practical---despite the small scale---when the serviced region is partitioned and separate negotiations are performed in each partition. This is one approach that reduces the need for suboptimal heuristics for pricing, which can ensure practical negotiation run-times. We evaluate an alternative approach based on a heuristic approach on a
scaled-up network in Section~\ref{sec:large_scale}.

The pick-up and drop-off locations of passengers in the network are drawn from real locations in Prague, Czech Republic: $K = 5$ drivers; average vehicle velocity $\nu = 30$ km/hr; cost/km of $0.4$ euros/km; and a maximum price-rate for each passenger of $3$ euros/km. We further assume that there is an hour between each mechanism run, which means that the beginning of a passengers pick-up interval is uniformly distributed over the $60$ minute interval. The maximum length of each passenger's pick-up interval is $10$ minutes, with the actual interval length uniformly distributed.

We first demonstrate the performance mechanism with clustering using hard constraints (i.e., we set $\epsilon = 0$ and $\epsilon_{step} > 1$). We note that the mechanism with a fixed price-rate given by $\mathbb{E}[r_{i,\max}]$ and clustering with hard constraints is used as a benchmark.

Fig.~\ref{fig:profitfixed} plots the expected profit per mechanism run for a varying number of potential passengers, $N$. The key observation is that our mechanism with optimized price-rate always improves the profit over the fixed price-rate approach, although the improvement depends on the passenger demand; i.e., how much passengers are willing to pay. Observe that there is a significant improvement in expected profit for both low and high demand, corresponding to $\alpha_r = 1,\beta_r = 3$ and $\alpha_r = 1, \beta_r = 1$, respectively. We also point out that the rate of increase in the expected profit with the number of passengers, $N$, is reduced as $N$ increases. This is due to a saturation effect, where increasing the number of potential passengers in the network does not significantly improve the profit.

\begin{figure}[!h]
\centerline{\includegraphics[height=4in]{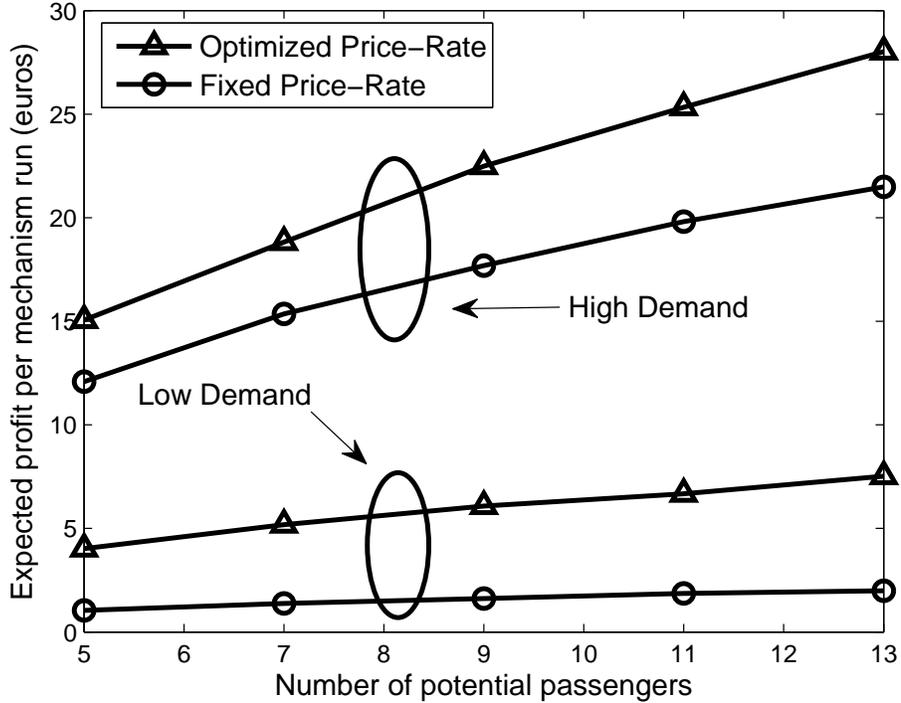}}
\caption{\textit{Plot of the expected profit for our mechanism with optimized price-rate and the standard fixed price-rate approach, for varying number of potential passengers, $N$. High demand corresponds to $\alpha_r = 3$, $\beta_r = 1$, medium demand corresponds to $\alpha_r = \beta_r = 1$, and low demand corresponds to $\alpha_r = 1$, $\beta_r = 3$. Simulation parameters: $K = 5$; average velocity of $30$ km/hr; cost/km of $0.4$; maximum price-rate of $3$ euros/km; and hard feasibility constraints are enforced.}} \label{fig:profitfixed}
\end{figure}

Next, we turn to the expected efficiency (see (\ref{def:efficiency})) of our mechanism demonstrated in Fig.~\ref{fig:efficiencyfixed}. In this case, observe that our mechanism improves the expected profit over the fixed price-rate approach for low demand with $N \geq 7$. This means that our mechanism outperforms the fixed price-rate approach in both expected profit and efficiency. On the other hand, the fixed price-rate approach has a higher efficiency for high demand.

Next, we demonstrate the performance improvements that can be obtained by using our optimized clustering algorithm, where $\epsilon_{step} \leq 1$. In particular, this improves the efficiency in the case of high demand.

\begin{figure}[!h]
\centerline{\includegraphics[height=4in]{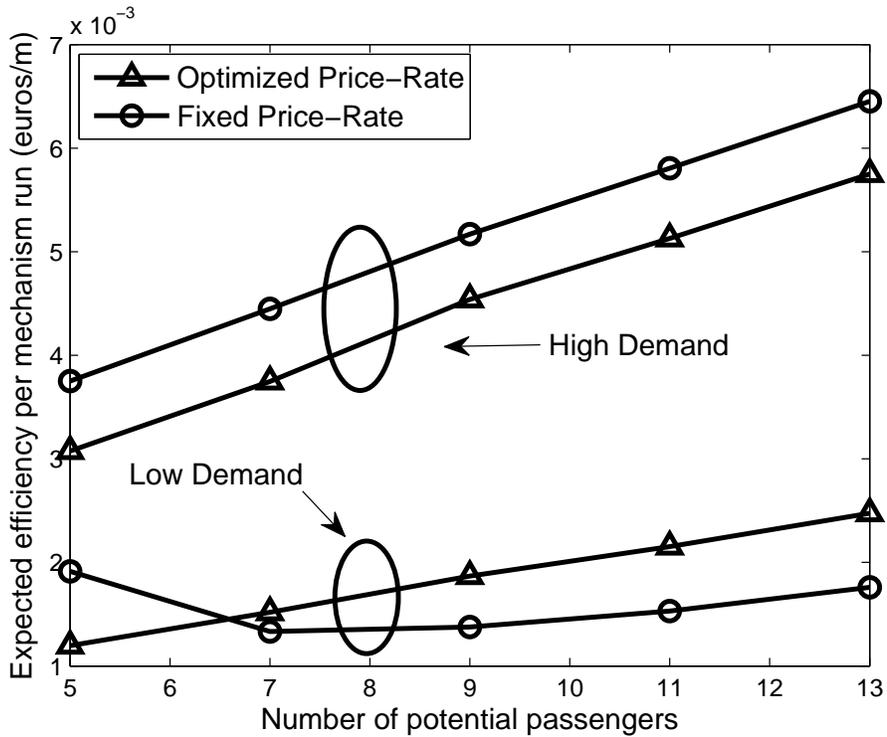}}
\caption{\textit{Plot of the expected efficiency for our mechanism with optimized price-rate and the standard fixed price-rate approach, for varying number of potential passengers, $N$. High demand corresponds to $\alpha_r = 3$, $\beta_r = 1$, medium demand corresponds to $\alpha_r = \beta_r = 1$, and low demand corresponds to $\alpha_r = 1$, $\beta_r = 3$. Simulation parameters: $K = 5$; average velocity of $30$ km/hr; cost/km of $0.4$ euros/km; maximum price-rate of $3$ euros/km; and hard feasibility constraints are enforced.}} \label{fig:efficiencyfixed}
\end{figure}

In Table~\ref{table:optclustering_profit}, the expected profit is compared with the number of potential passengers, with and without optimized clustering. The optimized clustering solves (\ref{eq:pricing_opt}) with $\epsilon_{step} = 0.2$, while the algorithm without optimized clustering enforces the hard constraints (i.e., $\epsilon = 0$). In both cases, the price-rate is optimized. Observe that our optimized clustering algorithm improves the expected profit, even with optimal pricing. In particular, gains of up to $4$ euros per negotiation can be achieved. This means that the expected profit for the fixed price-rate approach can be improved significantly (up to $10$ euros per negotiation with $\alpha_r = \beta_r = 1$) by using both optimized pricing and clustering.

\begin{table}[!h]
\caption{Expected profit per mechanism run (in euros) with and without optimized clustering. Parameters: $\alpha_r = \beta_r = 1$, $\alpha_{\delta} = 3$, $\beta_{\delta} = 1$.}\label{table:optclustering_profit}
\begin{center}
 \begin{tabular}{|c|c|c|c|c|c|}
 \hline
 Potential Passengers $N$ & $5$ & $7$ &  $9$ & $11$ & $13$\\ \hline
 With Optimized Clustering (euros) & $17.5$ & $23.2$ & $26.3$ & $28.1$ & $30.0$ \\ \hline

 Without Optimized Clustering (euros) & $14.5$ & $18.8$ & $22.0$ & $25.5$ & $27.8$ \\ \hline
 \end{tabular}
\end{center}

\end{table}

In Table~\ref{table:optclustering_eff}, the expected efficiency (in euros/m) is compared with the number of potential passengers, with and without optimized clustering. Observe that the expected efficiency of optimized clustering outperforms the expected efficiency without optimized clustering. Importantly, optimizing the clustering ensures that the expected efficiency of our mechanism is comparable with the efficiency using the fixed price-rate approach. As such, our mechanism can outperform the fixed price-rate approach in both expected profit and expected efficiency.

\begin{table}[!h]
\caption{Expected efficiency per mechanism run (in euros) with and without optimized clustering. Parameters: $\alpha_r = \beta_r = 1$, $\alpha_{\delta} = 3$, $\beta_{\delta} = 1$.}\label{table:optclustering_eff}
\begin{center}
 \begin{tabular}{|c|c|c|c|c|c|}
 \hline
 Potential Passengers $N$ & $5$ & $7$ &  $9$ & $11$ & $13$\\ \hline

 Optimized Clustering (euros/m) & $0.0038$ & $0.0051$ & $0.0063$ & $0.0074$ & $0.0085$ \\ \hline
 Without Optimized Clustering (euros/m) & $0.0030$ & $0.0038$ & $0.0044$ & $0.0052$ & $0.0057$ \\ \hline

 \end{tabular}
\end{center}

\end{table}

\subsection{The Effect of Network Scaling}\label{sec:large_scale}

We now turn to simulation results for a scaled-up network with $30$ vehicles and up to $100$ passengers. The average velocity, cost/km, maximum price-rate, and pick-up interval distributions are the same as in Section~\ref{sec:key_trends}.

In order to apply our mechanism to a network of this scale, it is necessary to modify our mechanism by approximating the objective of the optimization problem in (\ref{eq:pricing_opt}) by truncating the sum involved in the expectation in (\ref{eq:pricing_opt}). Moreover, clustering is limited to the finding the first feasible cluster instead of the full search detailed in Algorithm~\ref{alg:insertion}. Using these heuristics, next we compare our mechanism with the fixed price-rate approach without clustering; i.e., each vehicle serves up to one passenger per negotiation run.

Table~\ref{table:large_scale} demonstrates the effect of the heuristic on the performance of the network. Observe that in all cases the optimized pricing in our mechanism outperformed the fixed price-rate approach, in terms of expected profit. Note that the expected profit for the fixed price-rate approach is approximately constant, irrespective of the number of passengers. This is due to the saturation effect also observed in Fig.~\ref{fig:profitfixed}. On the other hand, the expected profit for the optimized pricing approach reduces as the number of potential passengers increases, in contrast with the results in Section~\ref{sec:key_trends}. This is due to the approximation of the objective, which is less accurate as the number of potential passengers increases.

\begin{table}[!h]
\caption{Expected profit per mechanism run (in euros). Parameters: $\alpha_r = \beta_r = 1$, $\alpha_{\delta} = 3$, $\beta_{\delta} = 1$.}\label{table:large_scale}
\begin{center}
 \begin{tabular}{|c|c|c|c|}
 \hline
 Potential Passengers $N$ & $30$ & $60$ & $90$\\ \hline
 Optimized Pricing (euros) & $83.7$ & $66.7$ & $52.9$\\ \hline
 Fixed Price-Rate (euros) & $52.6$ & $51.7$ & $52.4$\\ \hline
 \end{tabular}
\end{center}

\end{table}

\subsection{Summary Of Simulation Results}

We summarize the key results from our simulation study as follows.
\begin{enumerate}
\item Our mechanism outperforms the fixed price-rate approach in terms of expected profit, even without optimized clustering (i.e., $\epsilon = 0$ and $\epsilon_{step} > 1$) or using heuristic approximations.
\item Using optimized clustering with $\epsilon_{step} \leq 1$ both increases the expected profit and the expected efficiency of our mechanism. Moreover, the expected profit is significantly higher than the fixed price-rate approach, irrespective of the passenger demand, and also has comparable efficiency.
\item Increasing the passenger demand (i.e., varying the parameters $\alpha,\beta$ in the distribution) leads to an increase in the expected profit.
\item Increasing the number of potential passengers using our optimized clustering algorithm increases the expected profit; however, the expected profit saturates at a large number of potential passengers.
\item Using the heuristic modification of our mechanism for scaled-up networks (described in Section~\ref{sec:large_scale}) outperforms the fixed price-rate approach in terms of expected profit. However, the expected profit reduces as the number of potential passengers is increased. This suggests that an appropriate method to deal with large-scale networks is to partition the network by simultaneously running several small-scale negotiations. In this case, approximations of the objective function are not required, which in turn ensures that the expected profit can increase as the number of potential passengers increases.
\end{enumerate}

\section{Conclusions}

We have proposed a new market mechanism for a new thread of on-demand transport services, which enables negotiations with passengers to both increase provider profits and select passengers that value the service the most. A key feature of our mechanism is that it jointly optimizes scheduling, routing and passenger pricing; in sharp contrast with standard approaches for services targeted at the elderly and disabled, and taxis.

Our mechanism is based on a new agent-based model, which emphasizes the role of provider profit in allocating resources. In particular, we have developed new models that incorporate price-based preferences for both the passengers and the provider. We also consider the effect of deviations from passenger requests in our resource allocation.

Our simulations results for new thread of on-demand transport services demonstrated that our market mechanism improves the profitability of the service provider, compared with standard fixed price-rate approaches, while maintaining comparable efficiency.

\section*{Acknowledgments}

Access to computing and storage facilities owned by parties and projects contributing to the National Grid Infrastructure MetaCentrum, provided under the programme "Projects of Large Infrastructure for Research, Development, and Innovations" (LM2010005), is greatly appreciated.

\appendices

\section{Proof of Proposition}\label{app:prob_ignore}

Consider a time interval $[nT,(n+1)T]$ ($n \in \mathbb{N}$) between two consecutive mechanism runs, with $a \in [nT,(n+1)T)$. Then, the probability that a passenger's request is ignored is given by
\begin{align}
P_{ignore} &= 1 - \mathrm{Pr}(a + \Delta \geq (n+1)T).
\end{align}

Since request arrive according to a homogeneous Poisson process, the following lemma gives the distribution of the earliest pick-up time, $a$.
\begin{lemma}\label{lem:a_dist}
The earliest possible pick-up time $a$ is uniformly distributed in $[nT,(n+1)T)$, conditioned on the number of requests in the interval.
\end{lemma}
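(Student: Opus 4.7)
The plan is to derive this lemma as a direct application of the standard order-statistic property of homogeneous Poisson processes. Recall that if arrivals form a homogeneous Poisson process of rate $\kappa$, then conditioned on the event $\{N((n+1)T) - N(nT) = k\}$ (i.e., that exactly $k$ requests arrive in the interval $[nT,(n+1)T)$), the unordered arrival times are distributed as $k$ i.i.d.\ uniform random variables on $[nT,(n+1)T)$, and equivalently the ordered arrival times are distributed as the corresponding order statistics.

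First I would recall this classical property (which follows from the memorylessness of the exponential inter-arrival times, or equivalently from the joint density of Poisson arrival times being proportional to a constant on the simplex $nT \leq s_1 < \cdots < s_k < (n+1)T$). Next, I would observe that the earliest possible pick-up time $a$ for a given passenger coincides with that passenger's request arrival time (by the assumption from Section~\ref{sec:analysis} that requests are generated by the Poisson process, so each arrival corresponds to a passenger request). Picking any particular one of the $k$ requests arriving in $[nT,(n+1)T)$, its arrival time $a$ is then a single draw from the Uniform$[nT,(n+1)T)$ distribution guaranteed by the order-statistic property. Since this uniform conclusion holds conditional on every $k \geq 1$, the distributional statement of the lemma follows.

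There is no real obstacle here, since the lemma is essentially a textbook fact about Poisson processes; the only thing to be slightly careful about is making the conditioning explicit (the statement is conditioned on at least one request being present, since otherwise there is no $a$ to speak of), and noting that uniformity of any single marginal follows from the joint uniform-on-simplex density by symmetry. Once this is in place, the lemma is immediate and will feed directly into the calculation of $P_{ignore}$ in Proposition~\ref{prop:prob_ignore} via $\mathbb{E}[\mathrm{Pr}(a+\Delta \geq (n+1)T \mid a)]$ with $a$ uniform on $[nT,(n+1)T)$ and $\Delta$ exponential with mean $1/\lambda$.
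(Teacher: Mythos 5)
Your proposal is correct and coincides with what the paper does: the paper states Lemma~\ref{lem:a_dist} without an explicit proof, appealing directly to the homogeneous Poisson arrival assumption, i.e.\ exactly the order-statistic (conditional uniformity) property you spell out, including the identification of $a$ with the request arrival time and the symmetry/exchangeability point for a randomly chosen request. No gaps; your version simply makes the paper's implicit argument explicit.
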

This means that for a random request in the interval $[nT,(n+1)T)$, $a$ is uniformly distributed.

Using Lemma~\ref{lem:a_dist} and conditioning on the time between the request arrival and desired pick-up time, $\Delta$, we have
\begin{align}
P_{ignore} &= \int_0^{\infty} \mathrm{Pr}(a \geq (n+1)T - \delta|\Delta = \delta)\lambda e^{-\lambda \delta}d\delta.
\end{align}
Integrating by parts, we obtain the required result.

\section{Proof of Proposition}\label{app:long_service}

As for Proposition~\ref{app:prob_ignore}, consider a time interval $[nT,(n+1)T)$ ($n \in \mathbb{N}$) between two consecutive mechanism runs, with $a \in [nT,(n+1)T)$. Then, the probability that a passenger cannot be serviced before the next mechanism run is given by
\begin{align}
  P_{overtime} &= \mathrm{Pr}(a + \nu z \geq (n+1)T)\notag\\
  &= \int_0^{\infty} \mathrm{Pr}(a \geq (n+1)T - \nu z)f_Z(z)dz,
\end{align}
which follows by conditioning on the distance, $Z$. We note there is no loss in generality by assuming that the pick-up location is at the origin by Slivnyak's theorem \cite{Baccelli2009}.

Using Lemma~\ref{lem:a_dist} and the fact that $a \in [nT,(n+1)T)$, it follows that
\begin{align}
  P_{overtime} &= \frac{\nu}{T}\int_0^{T/\nu} zf_Z(z)dz + \int_{T/\nu}^\infty f_Z(z)dz\notag\\
  &= \frac{\nu}{T}\int_0^{T/\nu} 2\pi \zeta z^2 e^{-\pi \zeta z^2} dz + e^{-\pi\zeta T^2/\nu^2}.
\end{align}
Next we integrate by parts\footnote{We use the fact that $\int 2\pi \zeta ze^{-\pi \zeta z^2}dz = -e^{-\pi \zeta z^2}$.} to obtain
\begin{align}
  P_{overtime} &= \frac{\nu}{T}\left(\frac{T}{\nu}e^{-\pi\zeta T^2/\nu^2} + \int_0^{T/\nu} e^{-\pi \zeta z^2}dz\right) + e^{-\pi\zeta T^2/\nu^2}.
\end{align}
We then apply the change of variables $u = \sqrt{2\pi\zeta}z$ and the definition of the $Q$-function, which is given by
\begin{align}\label{eq:Q_func}
  Q(x) = \frac{1}{\sqrt{2\pi}}\int_x^\infty e^{-u^2/2}du.
\end{align}
Using the fact that $Q(0) = \frac{1}{2}$, we then have
\begin{align}
  P_{overtime} &= \left(1 - \frac{2\pi\zeta T}{\nu}\right)e^{-\pi\zeta T^2/\nu^2} + \frac{\nu}{T\sqrt{\zeta}}\left(\frac{1}{2} - Q\left(\frac{T\sqrt{2\pi \zeta}}{\nu}\right)\right),
\end{align}
as required.

\bibliographystyle{ieeetr}
\bibliography{ondemand}

\begin{thebibliography}{10}

\bibitem{Cordeau}
J.-F. Cordeau, ``The dial-a-ride problem: models and algorithms,'' {\em Annals
  of Operations Research}, vol.~153, pp.~29--46, 2007.

\bibitem{Barbucha}
D.~Barbucha, ``A multi-agent approach to the dynamic vehicle routing problem
  with time windows,'' in {\em Proc. of the Internation Conference on
  Computational Collective Intelligence}, 2013.

\bibitem{Cubillos}
C.~Cubillos, F.~Guido-Polanco, and C.~Demartini, ``Madarp: multi-agent
  architecture for passenger transportation systems,'' in {\em Proc. IEEE
  International Conference on Intelligent Transport Systems}, 2005.

\bibitem{Yang2010}
H.~Yang, C.~Fung, K.~Wong, and S.~Wong, ``Nonlinear pricing of taxi services,''
  {\em Transportation Research Part A: Policy and Practice}, vol.~44, no.~5,
  pp.~337--348, 2010.

\bibitem{Seow}
K.~Seow, N.~Dang, and D.-H. Lee, ``A collaborative multiagent taxi-dispatch
  system,'' {\em IEEE Transactions on Automation Science and Engineering},
  vol.~7, no.~3, 2010.

\bibitem{Bai}
R.~Bai, J.~Li, J.~Atkin, and G.~Kendell, ``A novel approach to independent taxi
  scheduling problem based on stable matching,'' {\em Journal of the
  Operational Research Society}, vol.~65, 2014.

\bibitem{Yang2011}
H.~Yang and T.~Yang, ``Equilibrium properties of taxi markets with search
  frictions,'' {\em Transportation Research Part B: Methodological}, vol.~45,
  no.~4, pp.~696--713, 2011.

\bibitem{Balan}
R.~Balan, N.~Khoa, and L.~Jiang, ``Real-time trip information service for a
  large taxi fleet,'' in {\em Proc. International Conference on Mobile Systems,
  Applications, and Services}, 2011.

\bibitem{Glaschenko}
A.~Glaschenko, A.~Ivaschencko, G.~Rzevski, and P.~Skobolev, ``Multi-agent real
  time scheduling system for taxi companies,'' in {\em Proc. International
  Conference on Autonomous Agents and Multiagent Systems (AAMAS)}, 2009.

\bibitem{Oren2014}
C.~Zeng and N.~Oren, ``Dynamic taxi pricing,'' in {\em In Proc. European
  Conference on Artificial Intelligence}, 2014.

\bibitem{Egan2014}
M.~Egan and M.~Jakob, ``A profit-aware negotiation mechanism for on-demand
  transport services,'' in {\em In Proc. European Conference on Artificial
  Intelligence}, 2014.

\bibitem{Cont2001}
R.~Cont, ``Empirical properties of asset returns: stylized facts and
  statistical issues,'' {\em Quantitative Finance}, vol.~1, pp.~223--236, 2001.

\bibitem{Tesfatsion2001}
L.~Tesfatsion, ``Introduction to the special issue on agent-based computational
  economics,'' {\em Journal of Economic Dynamics and Control}, vol.~25,
  no.~3-4, pp.~281--292, 2001.

\bibitem{Verplanken2006}
B.~Verplanken and W.~Wood, ``Interventions to break and create customer
  habits,'' {\em Journal of Public Policy and Marketing}, vol.~25, no.~1,
  pp.~90--103, 2006.

\bibitem{Shoham2009}
Y.~Shoham and K.~Leyton-Brown, {\em Multiagent Systems: Algorithmic,
  Game-Theoretic, and Logical Foundations}.
\newblock Cambridge University Press, 2009.

\bibitem{Dumas}
Y.~Dumas, J.~Desrosiers, and F.~Soumis, ``The pickup and delivery problem with
  time windows,'' {\em European Journal of Operations Research}, vol.~54,
  pp.~7--22, 1991.

\bibitem{Baccelli2009}
F.~Baccelli and B.~Blaszczyszyn, {\em Stochastic Geometry and Wireless
  Networks: Volume 1: Theory}.
\newblock Now Publishers, Inc., 2009.

\end{thebibliography}

\end{document}